\newtheorem{theorem}{Theorem}
\theoremstyle{definition}
\newtheorem{remark}[theorem]{Remark}
\newcommand{\R}{\mathbb{R}}
\newcommand{\E}{\mathbb{E}}
\newcommand{\Filt}{\mathcal{F}}
\newcommand{\Prob}{\mathbb{P}}
\newcommand{\STOP}{\mathcal{T}}
\newcommand{\argmin}{\operatornamewithlimits{arg\,min}}
\newcommand{\esssup}{\operatornamewithlimits{ess\,sup}}
\let\inf\relax \DeclareMathOperator*\inf{\vphantom{p}inf}
\newcommand\I{\mathds{1}}
\title{Pricing Bounds for VIX Derivatives via Least Squares Monte Carlo}
\author{Ivan Guo \qquad Gregoire Loeper}
\affil{\normalsize School of Mathematical Sciences\\ Clayton Campus, Monash University, VIC, 3800, Australia}
\begin{document}
\maketitle

\begin{abstract}
Derivatives on the Chicago Board Options Exchange volatility index (VIX) have gained significant popularity over the last decade. The pricing of VIX derivatives involves evaluating the square root of the expected realised variance which cannot be computed by direct Monte Carlo methods. Least squares Monte Carlo methods can be used but the sign of the error is difficult to determine. In this paper, we propose new model independent upper and lower pricing bounds for VIX derivatives. In particular, we first present a general stochastic duality result on payoffs involving concave functions. This is then applied to VIX derivatives along with minor adjustments to handle issues caused by the square root function. The upper bound involves the evaluation of a variance swap, while the lower bound involves estimating a martingale increment corresponding to its hedging portfolio. Both can be achieved simultaneously using a single linear least square regression. Numerical results show that the method works very well for VIX futures, calls and puts under a wide range of parameter choices.

\bigskip
\noindent\textbf{Mathematics Subject Classification (2010):} 91G20, 91G60

\noindent\textbf{Keywords:} VIX derivatives, least squares Monte Carlo, pricing bounds
\end{abstract}

\section{Introduction}
The Chicago Board Options Exchange volatility index, commonly known as VIX, measures the volatility of the S\&P500 index. Formally, the VIX is the square root of the expected integrated variance (often called the realised variance) over a 30 day period, multiplied by an annualisation factor. In practice, it is calculated using a weighted sum of options on the S\&P500 index and it coincides with the square root of the par variance swap rate. The VIX itself is not a tradable asset, but VIX derivatives such as futures and options are. VIX futures began trading in 2004 while VIX options began in 2006. Since then, VIX derivatives have gained significant popularity as they allow traders to gain direct exposure to the volatility of the S\&P500 index without having to hold options the index.

In literature, there have been many theoretical approaches to the pricing of VIX derivatives. 
In earlier works, the authors focussed on finding analytical pricing formulae for volatility derivatives under particular volatility dynamics. Some examples include Whaley \cite{Whaley} (geometric Brownian motion), Gr\"unbichler and Longstaff \cite{Grunbichler} (square root process), Detemple and Osakwe \cite{Detemple} (log-normal Ornstein-Ulenbeck process). By only considering volatility futures and vanilla options as opposed to VIX derivatives, these works do not explicitly deal with the integrated variance term. This is rectified by Zhang and Zhu \cite{Zhang} who derived an analytical formula for the price of VIX futures under the Heston model. Furthermore they supplemented their work with empirical analyses by calibrating the model against historical VIX data. This pricing result was further generalised by Lian and Zhu \cite{Lian} to the Heston model with jumps via a characteristic function approach. Further progress was made for cases where the variance process follows a square root process with jumps (Sepp \cite{Sepp}) and a 3/2 process with jumps (Baldeaux and Badran \cite{Baldeaux}). Finally, some author undertook an alternative approach which directly models the variance swaps instead of the volatility. This allows for the consistent modelling and the simultaneous calibration of both index options and VIX derivatives. See Cont and Kokholm \cite{Cont} for an example of this approach.

In terms of numerical methods, PDE methods work well but only if the underlying dynamic is Markovian and resides in a low dimensional space. Due to the non-linearity of the square root function in the definition of the VIX, the price of VIX futures is highly model-dependent and cannot be inferred from direct Monte Carlo simulations. Instead, the evaluation of the conditional expectation of the integrated variance can be handled by nested simulations or least squares regressions. Nested Monte Carlo has good accuracy, but it is computationally expensive. Least square Monte Carlo approaches, popularised by Longstaff and Schwartz \cite{Longstaff} for Bermudan options, are much faster. Although the results are asymptotically unbiased, it is usually difficult to determine the sign of the error, which can be a useful piece of information in risk management. Rogers \cite{Rogers} as well as Haugh and Kogan \cite{Haugh} proposed a stochastic duality result which produces an upper bound to Bermudan option prices, complementing the original least squares Monte Carlo method which naturally provides a lower bound via suboptimal exercise policies. The quality of the upper bound relies on the identification of a martingale which majorises the price process. Andersen and Broadie \cite{Andersen} suggested to estimate the martingale using nested Monte Carlo. Later on more efficient approaches were found in various works such as Schoenmakers et al.\! \cite{Schoenmakers}. An overview of these upper bound methods without using nested simulations can be found in Joshi and Tang \cite{Joshi}.

In this paper, we present a new application of the stochastic duality and the least squares Monte Carlo methods to VIX derivatives, resulting in true upper and lower pricing bounds.
Although, at a first glance, the stochastic duality approach is not applicable to derivatives such as the VIX future due to the lack of early exercise features, we show that VIX derivatives can in fact be placed under the same framework using Legendre transforms which converts the VIX derivatives to a variant of the chooser option (see Remark \ref{remchooser}). Then by using techniques similar to Schoenmakers et al.\! \cite{Schoenmakers}, we perform a single least squares Monte Carlo to compute the required conditional expectation and martingale increment, which are used to evaluate the pricing bounds. The main results of the paper are Theorems \ref{propav01} and \ref{thmbb01}. Theorem \ref{propav01} presents a general stochastic duality result on payoffs involving concave functions. Theorem \ref{thmbb01} applies it to VIX derivatives, with minor adjustments to handle issues caused by the square root functions. Despite focussing our presentation on VIX derivatives in the local-stochastic volatility model, the techniques and results described in this paper are in fact completely model independent and directly applicable to many other derivatives in various settings. 

The paper is organised as follows. Section \ref{sec2} introduces the underlying framework and defines the VIX as well as its derivatives. Then in Section \ref{sec3}, theoretical upper and lower bounds are derived, along with techniques to handle the square root function in VIX. Section \ref{sec4} describes the Monte Carlo algorithm in detail while Section \ref{sec5} provides some numerical examples. Finally, Section \ref{sec6} contains some concluding remarks.

\section{Framework}\label{sec2}
The core techniques and results of this paper are completely model independent, but for the sake of presentation and readability, we have chosen to focus on the following model as an example. Let $(\Omega, \Filt, \Prob)$ be a filtered probability space where the filtration $\Filt$ represents the information flow available to market participants and $\Prob$ is a pricing measure. Consider the following general local-stochastic volatility (LSV) model for the price of a stock or a stock index $S_t$,
\begin{align*}
dS_t&= \mu(t,S_t) S_t dt+  \sigma(t,S_t, V_t)S_t dW^S_t,\\
dV_t&=a(t,V_t) dt+ b(t,V_t) dW^V_t,\\
\langle dW^S_t,dW^V_t\rangle & = \rho(t,S_t,V_t) dt.
\end{align*}
where $W^S_t$ and $W^V_t$ is are standard Brownian motions. For simplicity, the interest rate is set to be zero. Before continuing, let us again emphasise that the main results of the paper, Theorems \ref{propav01} and \ref{thmbb01}, are directly applicable to a much larger family of models, including high dimensional cases, models with jumps, and so on.

Let $0\leq t_0\leq T$. The \emph{realised variance} of $S_t$ during the time period $[t_0,T]$ is defined to be
\[
AF\sum_{i=1}^n \bigg(\log \frac{S_{t_i}}{S_{t_{i-1}}}\bigg)^2,
\]
where $t_0<t_1<\cdots<t_n=T$ are observation dates of $S_t$ and $AF$ is an annualisation factor. For example, if $t_i$ corresponds to daily observations then $AF=100^2\times 252/n$ and the realised variance is expressed in basis points per annum. As the mesh of the partition $\pi^n=\{t_0<t_1<\cdots<t_n\}$ tends to zero, the realised variance $R=R(t_0,T)$ can be represented as the quadratic variation of $\log S_t$, given by
\begin{gather}\label{eqrealisedvar}
R=R(t_0,T):=\lim_{n\to \infty} AF\sum_{t_i\in\pi^n} \bigg(\log \frac{S_{t_i}}{S_{t_{i-1}}}\bigg)^2 = \frac{100^2}{T-t_0} \int_{t_0}^T \sigma(t,S_t, V_t)^2\, dt.
\end{gather}
In the driftless case of $\mu(t,S_t)=0$, it is well-known that the expression in \eqref{eqrealisedvar} is equivalent to the value of a contingent claim with payoff $-2\log(S_T/S_{t_0})$, which can be further expressed as
\begin{align*}
\int_{t_0}^T \sigma(t,S_t, V_t)^2\, dt &=-2\E\left(\log\frac{S_T}{S_{t_0}}\,\middle|\,\Filt_{t_0}\right)\\
&=2\int_0^{S_{t_0}}\frac{\E((k-S_T)^+\,|\,\Filt_{t_0})}{k}\,dk+2\int_{S_{t_0}}^{\infty}\frac{\E((S_T-k)^+\,|\,\Filt_{t_0})}{k}\,dk.
\end{align*}
Hence the realised variance is actually observable from the prices of call and put options.
The \emph{VIX} $I=I(t_0,T)$ is defined to be the square root of the expected realised variance, 
\begin{gather*}
I(t_0,T)=\sqrt{\E(R(t_0,T)\,|\,\Filt_{t_0})} = 100 \times\sqrt{\frac{1}{T-t_0}\E\bigg(\int_{t_0}^T \sigma(t,S_t, V_t)^2\, dt\,\bigg|\,\Filt_{t_0}\bigg)}.
\end{gather*}
The VIX has a one month time horizon, or $T-t_0=1/12$.

Common derivatives on the VIX include futures, swaps, call options and put options. We will mostly focus on the the pricing of VIX futures and the VIX caps, which involves the computation of the following expectations:
\begin{align}\label{eqaa01}
u^f&:=\E (I(t_0,T)) = \E(\sqrt{\E(R(t_0,T)\,|\,\Filt_{t_0})}),\\
\label{eqaa02} u^c&:=\E (\min(I(t_0,T),K))= \E(\min(\sqrt{\E(R(t_0,T)\,|\,\Filt_{t_0})},K)).
\end{align}
Many other derivatives such as swaps, calls and puts can then be simply written in terms of $u^f$ and $u^c$:
\begin{align}\label{eqoptiondef1}
u^{swap}&:=\E (I(t_0,T)-K) = u^f-K,\\
u^{call}&:=\E(I(t_0,T)-K)^+ = u^f-u^c,\\
u^{put}&:=\E (K-I(t_0,T))^+ = K-u^c.\label{eqoptiondef3}
\end{align}
Note that if we were working in a model with stochastic interest rates, then forward prices will be used instead of futures in \eqref{eqoptiondef1}--\eqref{eqoptiondef3}.

\section{Upper and Lower Bounds}\label{sec3}

During the numerical pricing of VIX derivatives via Monte Carlo simulations, the main challenge is the computation of the inner conditional expectation in \eqref{eqaa01} and \eqref{eqaa02}, $\E(R(t_0,T)\,|\,\Filt_{t_0})$. This can be achieved by nested simulations or a least square Monte Carlo. In this section, we assume the exact value of $\E(R(t_0,T)\,|\,\Filt_{t_0})$ is unavailable, and propose a new Monte Carlo approach which produces true upper and lower bounds for VIX derivatives. 
This approach is similar to the well-known duality bounds for Bermudan and American options.

We will first briefly describe the duality bounds for a Bermudan or American option. For a more detailed exposition, the readers are referred to Rogers \cite{Rogers} or Haugh and Kogan \cite{Haugh}. Suppose that payoff process of the option is $Z$. The holder of the option chooses $\tau\in\STOP$ where $\STOP$ is the set of stopping times with values in $[0,T]$, corresponding to the available exercise opportunities (discrete in Bermuan, continuous in American). For any chosen $\tau$, the holder receives the payoff of $Z_\tau$ at time $\tau$. It is well-known that at time $t\in [0,T]$ the price of the option is given by $V_t=\esssup_{\tau\in\STOP} \E Z_\tau$, and that the price process $V$ is a supermartingale. It is clear that a lower bound of the option price $V_0$ can be found by selecting any sub-optimal stopping time $\tau'$ and computing $\E Z_{\tau'}$, and equality is achieved if $\tau'=\tau^*$ is the optimal stopping time. For an upper bound, let $M$ is an arbitrary martingale and consider $M_0+\E(\sup_t Z_t-M_t)$ where the supremum inside the expectation is taken path-wise. The validity of this upper bound can be checked by exchanging the expectation with the supremum and applying the optional sampling theorem. Equality is reached if the martingale $M$ is taken from the Doob-Meyer decomposition of the price process $V$, which can also be interpreted as the hedging portfolio. To summaries, bounds for the option price $V_0$ are given by
\[
 \E Z_\tau \leq V_0 \leq M_0+\E\left(\sup_{t\in[0,T]} Z_t-M_t\right),
\]
where $\tau$ is an arbitrary stopping time and $M$ is an arbitrary martingale.

A similar technique will be applied to obtain bounds for the VIX. These theoretical bounds rely on the following theorem.
\begin{theorem}\label{propav01}
Let $D\subseteq\R$ be an interval and $f:D\to\R$ be a concave function. Let $H$ be a $\Filt_T$-measurable random variable such that both $H$ and $f(H)$ are integrable. Fix $t_0\in[0,T]$.

(i) Suppose that $f^*:D^*\to\R$ is the concave conjugate of $f$, that is,
\begin{align}\label{eqau01}
f^*(y):=\inf_{x\in D}(xy-f(x)),
\end{align}
and $D^*$ is the domain of $f^*$ such that the infimum in \eqref{eqau01} is well defined. Then
\begin{align}\label{eqau02}
\E\Big( f(\E(H\,|\,\Filt_{t_0}))\Big)=\inf_{Y\in \mathcal{Y}_{t_0}}\E\Big(YH-f^*(Y)\Big),
\end{align}
where $\mathcal{Y}_{t_0}$ is the set of $\Filt_{t_0}$ measurable, integrable random variables taking values in $D^*$.

(ii)
We also have the equality
\begin{align}\label{eqav01}
\E\Big( f(\E(H\,|\,\Filt_{t_0}))\Big)=\sup_{M\in \mathcal{M}_{t_0}} \E\Big(f(H-M_{T})\Big),
\end{align}
where and $\mathcal{M}_{t_0}$ is the set of martingales which vanish at time $t_0$. Note that in \eqref{eqav01}, we have adopted the convention of $f(x)=-\infty$ for $x\notin D$.
\end{theorem}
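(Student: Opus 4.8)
The plan is to prove both identities by establishing the ``easy'' inequality for every competitor and then exhibiting an explicit extremiser. The engine behind part (i) is the Fenchel--Young inequality for concave conjugates, with extremiser a measurable selection of the superdifferential of $f$ evaluated at $\E(H\,|\,\Filt_{t_0})$; the engine behind part (ii) is the conditional Jensen inequality, with extremiser the Doob martingale of $H$ started at $t_0$. A preliminary observation used throughout: since $f(H)$ is assumed defined and integrable we have $H\in D$ a.s., and since $D$ is an interval this forces $\E(H\,|\,\Filt_{t_0})\in D$ a.s.; concavity gives an affine majorant of $f$, so together with conditional Jensen ($f(\E(H\,|\,\Filt_{t_0}))\ge\E(f(H)\,|\,\Filt_{t_0})$) one checks that $f(\E(H\,|\,\Filt_{t_0}))$ is integrable, so the left-hand sides of \eqref{eqau02} and \eqref{eqav01} make sense.

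For part (i), the inequality ``$\le$'' is immediate: for any $Y\in\mathcal{Y}_{t_0}$ the definition \eqref{eqau01} of $f^*$ gives, pointwise, $f^*(Y)\le Y\,\E(H\,|\,\Filt_{t_0})-f(\E(H\,|\,\Filt_{t_0}))$, i.e.\ $f(\E(H\,|\,\Filt_{t_0}))\le Y\,\E(H\,|\,\Filt_{t_0})-f^*(Y)$; taking expectations and using $\Filt_{t_0}$-measurability of $Y$ to replace $\E(Y\,\E(H\,|\,\Filt_{t_0}))$ by $\E(YH)$ yields $\E(f(\E(H\,|\,\Filt_{t_0})))\le\E(YH-f^*(Y))$, and one then takes the infimum over $Y$. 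For the reverse inequality I would take $Y^\ast:=f'_-(\E(H\,|\,\Filt_{t_0}))$, a supergradient of $f$ at $\E(H\,|\,\Filt_{t_0})$; since $f'_-$ is monotone hence Borel, $Y^\ast$ is $\Filt_{t_0}$-measurable. By the equality case of Fenchel--Young, $f^*(Y^\ast)=Y^\ast\,\E(H\,|\,\Filt_{t_0})-f(\E(H\,|\,\Filt_{t_0}))$ a.s., and taking expectations (again via the tower property) gives $\E(f(\E(H\,|\,\Filt_{t_0})))=\E(Y^\ast H-f^*(Y^\ast))$, which is $\ge$ the infimum. Combining the two bounds proves \eqref{eqau02}, with the infimum attained at $Y^\ast$.

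For part (ii), fix $M\in\mathcal{M}_{t_0}$. Applying conditional Jensen to the concave $f$ (extended by $-\infty$ off $D$, which only weakens the right-hand side) and using $\E(M_T\,|\,\Filt_{t_0})=M_{t_0}=0$ gives $\E(f(H-M_T)\,|\,\Filt_{t_0})\le f(\E(H-M_T\,|\,\Filt_{t_0}))=f(\E(H\,|\,\Filt_{t_0}))$; taking expectations shows $\E(f(H-M_T))\le\E(f(\E(H\,|\,\Filt_{t_0})))$, so the supremum is $\le$ the left-hand side. For the reverse, let $M^\ast_t:=\E(H\,|\,\Filt_{t\vee t_0})-\E(H\,|\,\Filt_{t_0})$, which is a martingale (as $H$ is integrable) vanishing at $t_0$. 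Since $H$ is $\Filt_T$-measurable, $\E(H\,|\,\Filt_T)=H$, so $H-M^\ast_T=\E(H\,|\,\Filt_{t_0})$ identically, whence $\E(f(H-M^\ast_T))=\E(f(\E(H\,|\,\Filt_{t_0})))$ and the supremum is attained at $M^\ast$. This proves \eqref{eqav01}.

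The routine steps above all hide the same friction, which I expect to be the main obstacle: integrability and boundary behaviour. In the ``$\le$'' half of (i) one needs $\E(YH)$ and $\E(f^*(Y))$ to be well defined for every admissible $Y$, and in the attainment step one needs $Y^\ast$ to actually lie in $\mathcal{Y}_{t_0}$ (integrable, $D^*$-valued) — which can fail when $\E(H\,|\,\Filt_{t_0})$ approaches an endpoint of $D$ at which $\partial f$ or $f^*$ blows up, exactly the situation for $f(x)=\sqrt{x}$ near $0$. Handling this requires either restricting to the regime where these quantities are finite, or a truncation/approximation argument (replace $Y^\ast$ by $Y^\ast\wedge n$ and pass to the limit by monotone convergence), which is presumably the ``minor adjustment'' alluded to for the square-root case; similarly the conditional Jensen step in (ii) needs $H-M_T$ integrable with a controlled negative part so that $\E(f(H-M_T))$ is meaningful. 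None of these is deep, but this is where the care must go.
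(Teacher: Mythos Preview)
Your proposal is correct and follows essentially the same approach as the paper: for (i) the paper invokes the biconjugate representation $f(x)=\inf_{y\in D^*}(xy-f^*(y))$ and substitutes $x=\E(H\,|\,\Filt_{t_0})$, noting the optimal $y$ is $\Filt_{t_0}$-measurable, while you spell this out via Fenchel--Young and the supergradient $Y^*=f'_-(\E(H\,|\,\Filt_{t_0}))$; for (ii) both you and the paper use conditional Jensen for one direction and the Doob martingale $M_t=\E(H\,|\,\Filt_t)-\E(H\,|\,\Filt_{t_0})$ (your $t\vee t_0$ refinement is harmless) for attainment. Your treatment is more careful than the paper's about the integrability and boundary issues you flag at the end---the paper simply does not address these---so your version is, if anything, a strict improvement in rigour.
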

\begin{proof}
(i) By the definition of the concave conjugate and the Legendre transform, $f^*$ satisfies
\[
f(x)=\inf_{y\in D^*}(xy-f^*(y)),
\]
and for each $x$, there exists a $y^*$ where equality is reached. Then \eqref{eqau01} follows by substituting $x$ with $\E(H\,|\,\Filt_{t_0})$ and noting that $y$ can be chosen according to the value of $\E(H\,|\,\Filt_{t_0})$, thus equality is attained for a $Y$ which is $F_{t_0}$-measurable.

(ii) By Jensen's inequality,
\[
\E\Big(f(H-M_{T})\Big) \leq \E\Big(f(\E(H-M_{T}\,|\,\Filt_{t_0}))\Big) = \E\Big( f(\E(H\,|\,\Filt_{t_0}))\Big).
\]
Furthermore, equality can be achieved by choosing the martingale defined by $M_t=\E(H-\E(H\,|\,\Filt_{t_0})\,|\,\Filt_t)$. Thus \eqref{eqav01} is established. Note that the result still holds if we relax the set $\mathcal{M}_{t_0}$ to include submartingales.
\end{proof}

Since $\sqrt{x}$ and $\min(\sqrt{x},c)$ are concave functions, Theorem \ref{propav01} provides natural bounds for VIX futures and caps. The quality of the bounds depends on the exact choice of $Y$ in \eqref{eqau02} and $M$ in \eqref{eqav01}. However, there is a problem with the lower bound
\begin{gather}\label{eqba02}
u^f\geq \E\left(\sqrt{R-M_T}\right),
\end{gather}
since for many choices of $M$, $M_T$ would exceed $R$ with non-zero probability, which then leads to the unusable lower bound of $-\infty$. This issue is resolved by the following theorem.

\begin{theorem}\label{thmbb01}
Denote the realised variance over $[t_0,T]$ by $R=R(t_0,T)$. Let $X$ be any positive $\Filt_{t_0}$-measurable random variable and $M$ be any martingale with $M_{t_0}=0$. Then we have the following inequalities.

(i) The VIX future price $u^f=\E( \sqrt{\E(R\,|\,\Filt_{t_0})})$ satisfies
\begin{gather}\label{eqbb01}
\E\left(\frac{R}{2\sqrt{X}}+\frac{\sqrt{X}}{2}\right) \geq u^f 
 \geq \E\left(\sqrt{(R-M_{T})^+}\right)-\sqrt{\E\left(\sqrt{\max(R, M_{T})}-\sqrt{R}\right)^2},
\end{gather}
where $x^+=\max(x,0)$. Equalities are achieved when $X=\E(R\,|\,\Filt_{t_0})$ and $M_T=R-\E(R\,|\,\Filt_{t_0})$.

(ii) Fix $K>0$, the VIX cap price $u^c=\E(\min(\sqrt{\E(R\,|\,\Filt_{t_0})},K))$ satisfies
\begin{gather}\label{eqbb02}
\E\left(\left(\frac{R}{2\sqrt{X}}+\frac{\sqrt{X}}{2}\right)\I(X\leq K^2)+K\I(X>K^2)\right) \geq u^c \\
 \geq \E\left(\min\left(\sqrt{(R-M_{T})^+},K\right)\right)-\sqrt{\E\left(\sqrt{\max(R, M_{T})}-\sqrt{R}\right)^2}.
\end{gather}
Equalities are again achieved when $X=\E(R\,|\,\Filt_{t_0})$ and $M_T=R-\E(R\,|\,\Filt_{t_0})$.
\end{theorem}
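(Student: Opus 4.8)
Here is how I would approach a proof of Theorem~\ref{thmbb01}.

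The plan is to prove the two upper bounds by a pointwise Fenchel--Young (tangent-line) inequality, and the two lower bounds by conditional Jensen combined with the martingale property of $M$, followed by a sharp Cauchy--Schwarz estimate of the error term. Throughout I would write $\bar R:=\E(R\,|\,\Filt_{t_0})$, $P:=\max(R,M_T)$ and $\bar P:=\E(P\,|\,\Filt_{t_0})$, noting $\bar R\le\bar P$ since $R\le P$.

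\emph{Upper bounds.} I would start from the elementary fact that the tangent line to the concave map $\sqrt{\cdot}$ at a point $x>0$ lies above it, i.e.\ $\sqrt{a}\le\frac{a}{2\sqrt{x}}+\frac{\sqrt{x}}{2}$ for all $a\ge0$, with equality iff $a=x$; this is precisely the Fenchel--Young inequality underlying Theorem~\ref{propav01}(i) for $f=\sqrt{\cdot}$. Applying it with $a=\bar R$ and $x=X$, both $\Filt_{t_0}$-measurable, and taking expectations while pulling the $\Filt_{t_0}$-measurable factor $1/(2\sqrt X)$ through $\E(\,\cdot\,|\,\Filt_{t_0})$, gives the upper bound in \eqref{eqbb01}; the choice $X=\bar R$ makes it an equality since $\E(R/(2\sqrt X)\,|\,\Filt_{t_0})=\bar R/(2\sqrt X)=\sqrt X/2$. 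For the cap I would instead use $\min(\sqrt a,K)\le\sqrt a\le\frac{a}{2\sqrt X}+\frac{\sqrt X}{2}$ on $\{X\le K^2\}$ and $\min(\sqrt a,K)\le K$ on $\{X>K^2\}$, obtaining the piecewise upper bound in \eqref{eqbb02}, again with equality at $X=\bar R$ (on $\{X\le K^2\}=\{\sqrt X\le K\}$ one has $\min(\sqrt{\bar R},K)=\sqrt{\bar R}$).

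\emph{Lower bounds.} The naive choice $\E(\sqrt{R-M_T})$ coming from Theorem~\ref{propav01}(ii) is $-\infty$ whenever $\Prob(M_T>R)>0$, so the crucial observation is the identity $(R-M_T)^+=\max(R,M_T)-M_T=P-M_T$. The martingale property $\E(M_T\,|\,\Filt_{t_0})=0$ then gives $\E\big((R-M_T)^+\,\big|\,\Filt_{t_0}\big)=\bar P$, and conditional Jensen applied to the concave function $f$ (with $f=\sqrt{\cdot}$ for the future and $f=\min(\sqrt{\cdot},K)$ for the cap) yields $\E\big(f((R-M_T)^+)\big)\le\E\big(f(\bar P)\big)$. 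Since $t\mapsto\min(t,K)$ is nondecreasing and $1$-Lipschitz, in both cases $f(\bar P)-f(\bar R)\le\sqrt{\bar P}-\sqrt{\bar R}$, and as $u^f=\E(f(\bar R))$ (resp.\ $u^c=\E(f(\bar R))$), it remains to establish
\[
\E\big(\sqrt{\bar P}-\sqrt{\bar R}\big)\le\sqrt{\E\big(\sqrt{\max(R,M_T)}-\sqrt{R}\big)^2}.
\]
I would prove this by writing $\sqrt{\bar P}-\sqrt{\bar R}=(\bar P-\bar R)/(\sqrt{\bar P}+\sqrt{\bar R})$ together with $\bar P-\bar R=\E\big((\sqrt P-\sqrt R)(\sqrt P+\sqrt R)\,\big|\,\Filt_{t_0}\big)$, bounding the numerator by conditional Cauchy--Schwarz and using $\E\big((\sqrt P+\sqrt R)^2\,\big|\,\Filt_{t_0}\big)\le(\sqrt{\bar P}+\sqrt{\bar R})^2$ --- itself conditional Cauchy--Schwarz on $\E(\sqrt{PR}\,|\,\Filt_{t_0})$ --- to cancel the denominator, which leaves $\sqrt{\bar P}-\sqrt{\bar R}\le\sqrt{\E((\sqrt P-\sqrt R)^2\,|\,\Filt_{t_0})}$; a final application of (unconditional) Jensen to $\sqrt{\cdot}$ gives the display. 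Rearranging yields the lower bounds in \eqref{eqbb01}--\eqref{eqbb02}, and choosing $M_T=R-\bar R$ makes them equalities: then $\max(R,M_T)=R$ (because $\bar R\ge0$), so the error term vanishes, while $(R-M_T)^+=\bar R$, so $\E(f((R-M_T)^+))=\E(f(\bar R))$.

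\emph{Main obstacle.} The delicate point is to land on the error term with the sharp constant $1$ rather than a factor $2$: a crude estimate $\E((\sqrt P+\sqrt R)^2\,|\,\Filt_{t_0})\le2(\bar P+\bar R)\le4\bar P$ would only give $\sqrt{\bar P}-\sqrt{\bar R}\le2\sqrt{\E((\sqrt P-\sqrt R)^2\,|\,\Filt_{t_0})}$, so one genuinely needs the tighter $\E(\sqrt{PR}\,|\,\Filt_{t_0})\le\sqrt{\bar P\,\bar R}$ to reach $(\sqrt{\bar P}+\sqrt{\bar R})^2$ exactly. A minor technicality is the degenerate event $\{\bar R=0\}$, on which $R=0$ (as $R\ge0$) and every estimate above holds with equality; and, as usual, when a right-hand side is $+\infty$ the corresponding bound is vacuous, so no integrability hypotheses beyond those already in force are needed.
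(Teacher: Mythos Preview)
Your proof is correct and follows essentially the same route as the paper: Fenchel--Young/tangent-line for the upper bounds, and for the lower bounds the identity $\E((R-M_T)^+\,|\,\Filt_{t_0})=\E(\max(R,M_T)\,|\,\Filt_{t_0})$ together with conditional Jensen and the $1$-Lipschitz property of $\min(\cdot,K)$. Your ``main obstacle'' dissolves once you recognise the key estimate $\sqrt{\bar P}-\sqrt{\bar R}\le\sqrt{\E\big((\sqrt P-\sqrt R)^2\,\big|\,\Filt_{t_0}\big)}$ as nothing but the reverse triangle inequality $\|\sqrt P\|-\|\sqrt R\|\le\|\sqrt P-\sqrt R\|$ in conditional $L^2$, which is precisely how the paper obtains it in one line; your double Cauchy--Schwarz argument is a valid but roundabout re-proof of that inequality.
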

\begin{proof} (i)
The function $\sqrt{x}$ has the following Legendre transform,
\begin{align}\label{eqav02}
\sqrt{x}&=\inf_{y>0}\bigg(xy+\frac{1}{4y}\bigg),
\end{align}
where the infimum is achieved by $y^*=\frac{1}{2\sqrt{x}}$. Then by Proposition \ref{propav01} (i), for any positive $\Filt_{t_0}$-measurable random variable $Y$, we have
\[
\E\left( \sqrt{\E(R\,|\,\Filt_{t_0})}\right) \leq \E\left( R Y+\frac{1}{4Y}\right).
\]
The upper bound in \eqref{eqbb01} follows from the substitution $Y=\frac{1}{2\sqrt{X}}$.

For the lower bound, first note the identity
\begin{gather}
\E(\max(R, M_{T})\,|\,\Filt_{t_0})=\E((R-M_{T})^+ +M_T\,|\,\Filt_{t_0})=\E((R-M_{T})^+\,|\,\Filt_{t_0}).
\end{gather}
Now the required bound can be derived as follows,
\begin{align}\label{eqbc01}
\E(\sqrt{\E(R\,|\,\Filt_{t_0})})&\geq \E\bigg(\sqrt{\E(\max(R, M_{T})\,|\,\Filt_{t_0})}-\sqrt{\E((\sqrt{\max(R, M_{T})}-\sqrt{R})^2\,|\,\Filt_{t_0})}\bigg)\\
&= \E\bigg(\sqrt{\E((R-M_{T})^+\,|\,\Filt_{t_0})}-\sqrt{\E((\sqrt{\max(R, M_{T})}-\sqrt{R})^2\,|\,\Filt_{t_0})}\bigg)\\
\label{eqbc03} &\geq \E\left(\sqrt{(R-M_{T})^+}\right)-\sqrt{\E\left(\sqrt{\max(R, M_{T})}-\sqrt{R}\right)^2}.
\end{align}
The first inequality is due to the triangle inequality while the last inequality is due to Jensen's inequality. Note that we switched from $\max(R, M_{T})$ to $(R-M_{T})^+$ since the latter typically has lower variance for desirable choices of $M_T$ (i.e., for $M_T\approx R-\E(R\,|\,\Filt_{t_0})$), leading to a tighter Jensen's inequality. The equality cases can be easily checked via substitution.

(ii) The VIX cap case is similar to (i) with a few adjustments. The function $\min(\sqrt{x},K)$ has a Legendre transform given by,
\begin{align}\label{eqav03}
\min(\sqrt{x},K)&=\inf_{y\geq 0}\bigg(xy+\frac{1}{4y}\I(2yK\geq 1)+(K-K^2y)\I(2yK<1)\bigg)
\end{align}
where the infimum is achieved by $y^*=\frac{1}{2\sqrt{x}}\I(x\leq K^2)$.
Again applying Proposition \ref{propav01} (i), we have the upper bound
\[
\E\left( \min(\sqrt{\E(R\,|\,\Filt_{t_0})},K)\right) \leq \E\left( R Y+\frac{1}{4Y}\I(2YK\geq 1)+(K-K^2Y)\I(2YK< 1)\right).
\]
This simplifies to the required upper bound in \eqref{eqbb02} after substituting $Y=\frac{1}{2\sqrt{X}}\I(X\leq K^2)$.

The lower bound can be established by using the same argument as \eqref{eqbc01}--\eqref{eqbc03} in (i), combined with the inequality
\begin{align*}
&\sqrt{\E(\max(R, M_{T})\,|\,\Filt_{t_0})}-\sqrt{\E(R\,|\,\Filt_{t_0})} \\
&\qquad \geq  \min(\sqrt{\E(\max(R, M_{T})\,|\,\Filt_{t_0})},K)-\min(\sqrt{\E(R\,|\,\Filt_{t_0})},K).
\end{align*}
Note that we have used the fact that $\max(R, M_{T})\geq R$. Finally, the equality conditions can be checked by substitution.
\end{proof}
A key feature of the upper and lower bounds presented in Theorem \ref{thmbb01} is that they can all be computed using a standard Monte Carlo simulation.
The lower bound in \eqref{eqbb01} can be computed even if $\Prob(M_t>R)>0$. In the case where $R\geq M_t$ holds almost surely, it reduces to the simpler bound in \eqref{eqba02}, $\E(\sqrt{R-M_T})$.

\begin{remark}\label{remjensenbounds}
As an immediate consequence of Jensen's inequality, the value of the VIX future is bounded between the volatility swap and the square root of the variance swap, both evaluated at time 0,
\[
\E(\sqrt{R})\leq u^f \leq \sqrt{\E R}.
\]
Both of these bounds can be seen as special cases of Theorem \ref{thmbb01} (i), by setting $X$ to the variance swap $\sqrt{\E R}$ evaluated at time 0 and by setting $M_T$ to zero. Also, it is noteworthy that equality is reached in Theorem \ref{thmbb01} when $X$ is the variance swap evaluated at time $t_0$ and $M$ is hedging portfolio of the same variance swap during $[t_0,T]$. In practical implementations, if $M_T$ is poorly estimated and $M_t>R$ occurs frequently, it may be more advantageous to simply use $\E(\sqrt{R})$ as a lower bound instead.
\end{remark}

\begin{remark}\label{remchooser}
The upper bound in Theorem \ref{thmbb01} has the following interesting interpretation. The VIX future can be represented as a variant of the chooser option on the realised variance. In particular, the seller of the option may select a non-negative real $x$ at time $t_0$, and then must pay the holder $R/(2\sqrt{x}) + \sqrt{x}/2$ at time $T$. If the seller chooses optimally, i.e., minimising the expected payoff at time $t_0$, then the value of the option coincides with the VIX future.
\end{remark}

\section{Least Squares Monte Carlo}\label{sec4}

In this section, we shall described the empirical Monte Carlo algorithm used to compute bounds for VIX derivatives. The algorithm utilises a variant of the least squares Monte Carlo proposed by Schoenmakers et al.\! \cite{Schoenmakers} which simultaneously estimates the conditional expectation as well as the martingale increment. We refer the readers to Schoenmakers et al.\! \cite{Schoenmakers} for results regarding stability and convergence of the method, as well as Joshi and Tang \cite{Joshi} for an overview of related methods.

Suppose that the time interval $[t_0,T]$ is partitioned into $t_0<t_1<\cdots<t_n=T$. First simulate $N$ trajectories $S^i$ and $V^i$ for $i=1,\ldots,N$, and compute the corresponding realised variances $R^i$. Recall that, by Theorem \ref{thmbb01}, in order to obtain good quality bounds on VIX derivatives, it is important to find good approximations to the conditional expectation $X=\E(R\,|\,\Filt_{t_0})$ and the martingale increment $M_T=R-\E(R\,|\,\Filt_{t_0})$. We postulate that $X$ and $M_T$ can be approximated in terms of the state variables in the following way:
\begin{gather}
X=\E(R\,|\,\Filt_{t_{0}})   \approx \Psi(S_{t_0},V_{t_0}) := \sum_{j=1}^p \beta_j \psi_j(S_{t_0},V_{t_0}),\\
M_T=R-\E(R\,|\,\Filt_{t_0})=\sum_{l=0}^{n-1}\E(R\,|\,\Filt_{t_{l+1}})-\E(R\,|\,\Filt_{t_l})\approx \sum_{l=0}^{n-1}\Phi_{t_l}(S_{t_l},V_{t_l})\cdot\Delta W_{t_l},\\ 
\Psi(s,v) := \sum_{j=1}^p \beta_j \psi_j(s,v),\quad \Phi_{t_l}(s,v):=\sum_{j=1}^q \gamma_{j,{l}} \phi_j(s,v),
\end{gather}
where $\psi_j : \R^2\to\R$ and $\phi_j : \R^2\to \R^2$ are appropriate basis functions chosen beforehand. Note that $\Delta W_{t_l}:=(W^S_{t_{l+1}}-W^S_{t_{l}},W^V_{t_{l+1}}-W^V_{t_{l}})'$. In practice $\Delta W_{t_l}$ can be replaced by other appropriate martingale increments with the predictable representation property.

\begin{remark}
Due to the Markov properties of the model and the predictable representation theorem, if the space spanned by the basis function is rich enough, the conditional expectation can be matched exactly while the martingale increment will be replicated as the mesh of the partition goes to 0,
\[
\E(R\,|\,\Filt_{t_{0}})   = \Psi(S_{t_0},V_{t_0}),\quad R-\E(R\,|\,\Filt_{t_0})=\int_{t_0}^T \Phi_{t}(S_{t},V_{t})\cdot dW_{t}.
\]
\end{remark}

The coefficients 
\[
B=(\beta_j: j=1,\ldots, p),\quad \Gamma=(\gamma_{j,l}:j=1,\ldots, q; l=1,\ldots, n)
\]
are estimated in the linear least squares regression problem:
\begin{align*}
(\hat B,\hat \Gamma)&=\argmin_{B\in \R^p, \Gamma\in \R^{q\times n}} \sum_{i=1}^N\bigg(R^i - \Psi(S^i_{t_0},V^i_{t_0}) - \sum_{l=0}^{n-1}\Phi_{t_l}(S^i_{t_l},V^i_{t_l})\cdot\Delta W^i_{t_l}\bigg)^2\\
&=\argmin_{B\in \R^p, \Gamma\in \R^{q\times n}} \sum_{i=1}^N\bigg(R^i - \sum_{j=1}^p \beta_j \psi_j(S^i_{t_0},V^i_{t_0})- \sum_{l=0}^{n-1}\sum_{j=1}^q \gamma_{j,{l}} \phi_j(S^i_{t_l},V^i_{t_l})\cdot\Delta W^i_{t_l}\bigg)^2.
\end{align*}
Let us the denote the estimated functions by
\[
\hat\Psi(s,v) = \sum_{j=1}^p \hat\beta_j \psi_j(s,v),\quad \hat\Phi_{t_l}(s,v)=\sum_{j=1}^q \hat\gamma_{j,{l}} \phi_j(s,v).
\]

In order to compute true upper and lower bounds, we generate a new set of $\tilde N$ trajectories $\tilde S^i$ and $\tilde V^i$ for $i=1,\ldots,\tilde N$. 
This is performed to avoid the foresight bias caused by reusing the original trajectories. A detailed explanation of the foresight bias can be found in Fries \cite{Fries}.
Our new path-wise estimates of the conditional expectation and the martingale increment are
\[
\hat X^i=\hat\Psi(\tilde S^i_{t_0},\tilde V^i_{t_0}), \quad \hat M^i_T=\sum_{l=0}^{n-1}\hat\Phi_{t_l}(\tilde S_{t_l},\tilde V_{t_l})\cdot\Delta \tilde W^i_{t_l}.
\]
At this point we apply Theorem \ref{thmbb01} on the estimates $\hat X^i$ and $\hat M^i_T$ to produce bounds for the VIX future and cap. Specifically, we have
\begin{align}
\overline u^f&=\frac{1}{\tilde N}\sum_{i=1}^{\tilde N}\left(\frac{ \tilde R^i}{2\sqrt{\hat X^i}}+\frac{\sqrt{\hat X^i}}{2}\right),\label{eqccf1}\\
\underline u^f&= \frac{1}{\tilde N}\sum_{i=1}^{\tilde N}\left( \sqrt{(\tilde R^i-\hat M^i_{T})^+}\right)-\sqrt{\frac{1}{\tilde N}\sum_{i=1}^{\tilde N}\left( \sqrt{\max(\tilde R^i, \hat M^i_{T})}-\sqrt{\tilde R^i}\right)^2},\label{eqccf2}\\
\overline u^c&=\frac{1}{\tilde N}\sum_{i=1}^{\tilde N}\left(\left(\frac{ \tilde R^i}{2\sqrt{\hat X^i}}+\frac{\sqrt{\hat X^i}}{2}\right)\I(\hat X^i\leq K^2)+ K\I(\hat X^i>K^2)\right),\\
\underline u^c&= \frac{1}{\tilde N}\sum_{i=1}^{\tilde N}\left( \min\left(\sqrt{(\tilde R^i-\hat M^i_{T})^+},K\right)\right)-\sqrt{\frac{1}{\tilde N}\sum_{i=1}^{\tilde N}\left( \sqrt{\max(\tilde R^i, \hat M^i_{T})}-\sqrt{\tilde R^i}\right)^2}.
\end{align}
Note that the realised variances $\tilde R^i$ are directly computed from $\tilde S^i$ and $\tilde V^i$. 
Bounds for other derivatives such as swaps, calls and puts can now be easily computed:
\begin{alignat}{2}
\overline u^{swap}& = \overline u^f-K, &\quad \underline u^{swap}& = \underline u^f-K,\\
\overline u^{call}& = \overline u^f-\underline u^c,&\quad \underline u^{call}& = \underline u^f-\overline u^c,\\
\overline u^{put}&= K-\underline u^c, &\quad \underline u^{put}&= K-\overline u^c.
\end{alignat}

\begin{remark}
At a first glance, the term $\sqrt{\hat X^i}$ in the upper bound calculation could cause problems since $\hat X^i$ may be negative. In practical implementations, a floor is often imposed on the instantaneous volatility. It is then natural to enforce the same floor on $\hat X^i$,
\[
\hat X^i=\max(\hat\Psi(\tilde S^i_{t_0},\tilde V^i_{t_0}), h).
\]
The result will still be a true upper bound.
This is in contrast to the lower bound term $\sqrt{R-M_T}$ where the sign of $R-M_T$ is harder to control. A simple floor on $R-M_T$ will violate the validity of the lower bound. Thus Theorem \ref{thmbb01} was necessary to overcome this issue. In general, these issues can also be alleviated fit by using more and better basis functions, thus improving the least squares fit.
\end{remark}

\section{Numerical Results}\label{sec5}

For our numerical example, we choose the following variant of the CEV-Heston LSV model with volatility caps and floors:
\begin{align*}
dS_t&= \sigma(S_t,V_t) S_t dW^S_t,\\
dV_t&=\kappa(\theta-V_t) dt+ \eta\sqrt{V_t} dW^V_t,\\
\sigma(S_t,V_t) &= f(\sqrt{V_t} (S_t/S_0)^{\alpha-1}),\\
f(x)&=\max(\min(x,10),0.01),\\
\langle dW^S_t,dW^V_t\rangle & = \rho dt.
\end{align*}
This is essentially the same as the usual CEV-Heston model, but the effective volatility is bounded between $0.01$ and $10$.
Recall that the interest rate is assumed to be zero. Table \ref{tabparam} contains our chosen parameter value as well as their interpretations.
\begin{table}[h]\centering
\begin{tabular}{|c|c|c|}
\hline
Parameter & Value & Interpretation \\
\hline
$S_0$ & 100 & initial stock price\\
$\alpha$ & 0.8 & leverage between stock and volatility\\
$\sigma(S_0, V_0)$ & 0.3 & initial volatility\\
$V_0$ & 0.09 & initial variance\\
$\kappa$ & 0.6 & mean-reversion speed\\
$\theta$ & 0.09 & long term variance\\
$\eta$ & 0.4 & vol of vol\\
$\rho$ & -0.5 & correlation between stock and variance\\
$t_0$ & 1 & VIX start date\\
$T$ & 1+1/12 & VIX end date\\
$\Delta t$ & 1/120 & time increment\\
$N$ & 100000 & paths for regression\\
$\tilde N$ & 500000 & paths for bound calculation\\
\hline
\end{tabular}
\caption{Parameter values and interpretations}\label{tabparam}
\end{table}

We will be employing the algorithm described in Section \ref{sec4} to compute bounds for VIX futures, caps, calls and puts. The simulation scheme used will be the standard Euler scheme with full truncation. Antithetic variables are used for variance reduction. During the regression step, the following basis functions are used:
\begin{align*}
\Psi(s,v) &:= \sum_{j=1}^p \beta_j \psi_j(\log s,\sqrt{v}),\\
\Phi_{t_l}(s,v)&:=\sum_{j=1}^p \gamma_{j,{l}} \phi_j(\log s,\sqrt{v}) \left(\sigma(s,v) s \frac{d}{ds}\log s, \eta\sqrt{v}\frac{d}{dv}\sqrt{v} \right)'\\
&=\sum_{j=1}^p \gamma_{j,{l}} \phi_j(\log s,\sqrt{v}) \left(\sigma(s,v), \frac{\eta}{2}\right)',
\end{align*}
where $\psi_j$ and $\phi_j$ are bivariate polynomials. Two cases are examined: lower degree polynomials where $\psi_j$ and $\phi_j$ have degrees 3 and 2 respectively, and higher degree polynomials where $\psi_j$ and $\phi_j$ have degrees 4 and 3 respectively. During the computation of upper bounds, the volatility cap and floor function (i.e., $f$) is also applied to $\hat X$. In the computation of lower bounds, the martingale increments can be interpreted as the delta and vega hedging strategies.

\begin{table}[hp]\centering
\begin{tabular}{|c|c|c|c|c|}
\hline
& $u^f$ & $\hat u^f$ & $\underline u^f$ & $\overline u^f$ \\
\hline
& 27.3728   $\pm$   0.0445  &  27.3500   $\pm$   0.0446  &  27.3582   $\pm$   0.0445  &  27.4607   $\pm$   0.0454  \\
\hline
&  &   & $\E\sqrt{R}=27.1018$ & $\sqrt{\E R}=31.7342$\\
\hline
\hline
$K$ & $u^{call}$ & $\hat u^{call}$ & $\underline u^{call}$ & $\overline u^{call}$ \\
\hline
15  &  13.7302  $\pm$  0.0404  &  13.7417  $\pm$  0.0403  &  13.6269  $\pm$  0.0676  &  13.8326  $\pm$  0.0415  \\
20  &  10.2909  $\pm$  0.0367  &  10.2887  $\pm$  0.0367  &  10.1892  $\pm$  0.0817  &  10.3932  $\pm$  0.0380  \\
25  &  7.4785  $\pm$  0.0324  &  7.4672  $\pm$  0.0324  &  7.3775  $\pm$  0.0939  &  7.5809  $\pm$  0.0339  \\
30  &  5.2738  $\pm$  0.0280  &  5.2600  $\pm$  0.0280  &  5.1716  $\pm$  0.1021  &  5.3763  $\pm$  0.0297  \\
35  &  3.6176  $\pm$  0.0236  &  3.6065  $\pm$  0.0236  &  3.5156  $\pm$  0.1052  &  3.7202  $\pm$  0.0257  \\
40  &  2.4230  $\pm$  0.0196  &  2.4169  $\pm$  0.0196  &  2.3208  $\pm$  0.1034  &  2.5256  $\pm$  0.0221  \\
45  &  1.5912  $\pm$  0.0160  &  1.5900  $\pm$  0.0161  &  1.4887  $\pm$  0.0978  &  1.6938  $\pm$  0.0191  \\
\hline
\hline
$K$ & $u^{put}$ & $\hat u^{put}$ & $\underline u^{put}$ & $\overline u^{put}$ \\
\hline
15  &  1.3575  $\pm$  0.0079  &  1.3916  $\pm$  0.0083  &  1.2686  $\pm$  0.0091  &  1.3719  $\pm$  0.0079  \\
20  &  2.9181  $\pm$  0.0131  &  2.9386  $\pm$  0.0134  &  2.8310  $\pm$  0.0141  &  2.9326  $\pm$  0.0131  \\
25  &  5.1057  $\pm$  0.0185  &  5.1172  $\pm$  0.0187  &  5.0193  $\pm$  0.0193  &  5.1203  $\pm$  0.0185  \\
30  &  7.9010  $\pm$  0.0236  &  7.9100  $\pm$  0.0238  &  7.8134  $\pm$  0.0245  &  7.9157  $\pm$  0.0236  \\
35  &  11.2449  $\pm$  0.0282  &  11.2565  $\pm$  0.0284  &  11.1574  $\pm$  0.0291  &  11.2595  $\pm$  0.0283  \\
40  &  15.0502  $\pm$  0.0322  &  15.0669  $\pm$  0.0323  &  14.9625  $\pm$  0.0331  &  15.0650  $\pm$  0.0322  \\
45  &  19.2184  $\pm$  0.0354  &  19.2400  $\pm$  0.0355  &  19.1305  $\pm$  0.0363  &  19.2331  $\pm$  0.0354  \\
\hline
\end{tabular}
\caption{Lower degree polynomials results for VIX futures, calls and puts, including nested Monte Carlo results, least square Monte Carlo estimates, as well as lower and upper bounds}\label{tabmain}
\end{table}
\begin{table}[hp]\centering
\begin{tabular}{|c|c|c|c|c|}
\hline
& $u^f$ & $\hat u^f$ & $\underline u^f$ & $\overline u^f$ \\
\hline
& 27.3728   $\pm$   0.0445  &  27.3739   $\pm$   0.0445  &  27.3707   $\pm$   0.0445  &  27.3751   $\pm$   0.0455    \\
\hline
&  &   & $\E\sqrt{R}=27.1018$ & $\sqrt{\E R}=31.7342$\\
\hline
\hline
$K$ & $u^{call}$ & $\hat u^{call}$ & $\underline u^{call}$ & $\overline u^{call}$ \\
\hline
15  &  13.7302  $\pm$  0.0404  &  13.7313  $\pm$  0.0404  &  13.7265  $\pm$  0.0673  &  13.7346  $\pm$  0.0415  \\
20  &  10.2909  $\pm$  0.0367  &  10.2921  $\pm$  0.0367  &  10.2883  $\pm$  0.0814  &  10.2952  $\pm$  0.0380  \\
25  &  7.4785  $\pm$  0.0324  &  7.4793  $\pm$  0.0324  &  7.4756  $\pm$  0.0936  &  7.4828  $\pm$  0.0338  \\
30  &  5.2738  $\pm$  0.0280  &  5.2741  $\pm$  0.0280  &  5.2701  $\pm$  0.1018  &  5.2781  $\pm$  0.0296  \\
35  &  3.6176  $\pm$  0.0236  &  3.6173  $\pm$  0.0236  &  3.6140  $\pm$  0.1050  &  3.6220  $\pm$  0.0255  \\
40  &  2.4230  $\pm$  0.0196  &  2.4224  $\pm$  0.0196  &  2.4191  $\pm$  0.1034  &  2.4274  $\pm$  0.0218  \\
45  &  1.5912  $\pm$  0.0160  &  1.5904  $\pm$  0.0160  &  1.5871  $\pm$  0.0978  &  1.5955  $\pm$  0.0187  \\
\hline
\hline
$K$ & $u^{put}$ & $\hat u^{put}$ & $\underline u^{put}$ & $\overline u^{put}$ \\
\hline
15  &  1.3575  $\pm$  0.0079  &  1.3574  $\pm$  0.0079  &  1.3558  $\pm$  0.0088  &  1.3595  $\pm$  0.0079  \\
20  &  2.9181  $\pm$  0.0131  &  2.9182  $\pm$  0.0131  &  2.9176  $\pm$  0.0140  &  2.9202  $\pm$  0.0131  \\
25  &  5.1057  $\pm$  0.0185  &  5.1054  $\pm$  0.0185  &  5.1049  $\pm$  0.0194  &  5.1077  $\pm$  0.0185  \\
30  &  7.9010  $\pm$  0.0236  &  7.9001  $\pm$  0.0236  &  7.8994  $\pm$  0.0246  &  7.9031  $\pm$  0.0236  \\
35  &  11.2449  $\pm$  0.0282  &  11.2434  $\pm$  0.0282  &  11.2432  $\pm$  0.0292  &  11.2469  $\pm$  0.0282  \\
40  &  15.0502  $\pm$  0.0322  &  15.0484  $\pm$  0.0322  &  15.0484  $\pm$  0.0332  &  15.0523  $\pm$  0.0322  \\
45  &  19.2184  $\pm$  0.0354  &  19.2165  $\pm$  0.0354  &  19.2164  $\pm$  0.0364  &  19.2205  $\pm$  0.0354  \\
\hline
\end{tabular}
\caption{Higher degree polynomials results for VIX futures, calls and puts, including nested Monte Carlo results, least square Monte Carlo estimates, as well as lower and upper bounds}\label{tabmain2}
\end{table}

As an analytical benchmark, we will be using the results of a nested Monte Carlo. In this simulation, 500000 trajectories are generated up to time $t_0$. Then for each of these trajectories, a sub-simulation of 5000 trajectories is carried out on the time interval $[t_0,T]$ to compute the conditional expectation $\E(R\,|\,\Filt_{t_0})$ path-wise. The prices of the VIX derivatives are computed by averaging the relevant payoffs over all trajectories. In order to check the correctness of our bounds, the same 500000 paths on $[0,t_0]$ from the nested Monte Carlo will also be used in the second simulation of our least squares Monte Carlo. After that, the behaviour of the paths on $[t_0,T]$ are generated independently for the different methods. This allows us to compare the relative sizes of the results without the effects of variances due to simulation. In terms of computation times, the least squares Monte Carlo method is more than 1000 times faster than the nested method.

The results for lower degree polynomials are found in Tables \ref{tabmain} while the higher degree polynomials results are found in \ref{tabmain2}. In terms of notations, for VIX futures: $u^f$ is the analytical value computed using the nested Monte Carlo; $\hat u^f$ is the result of the classic least squares Monte Carlo by simply averaging the square root of the regression fit $\hat\Psi(\tilde S^i_{t_0},\tilde V^i_{t_0})$; $\underline u^f$ and $\overline u^f$ are the lower and upper bounds computed as described in \eqref{eqccf1} and \eqref{eqccf2}. For completeness, we have also included estimates for the volatility swap $\E\sqrt{R}$ and the square root of the variance swap $\sqrt{\E R}$. Similar notations are used for calls and puts over a range of strikes $K$. All confidence intervals are computed as 1.96 times the standard deviation. All values have also been annualised accordingly.

\begin{figure}[th]
\centering
\includegraphics[scale=0.7]{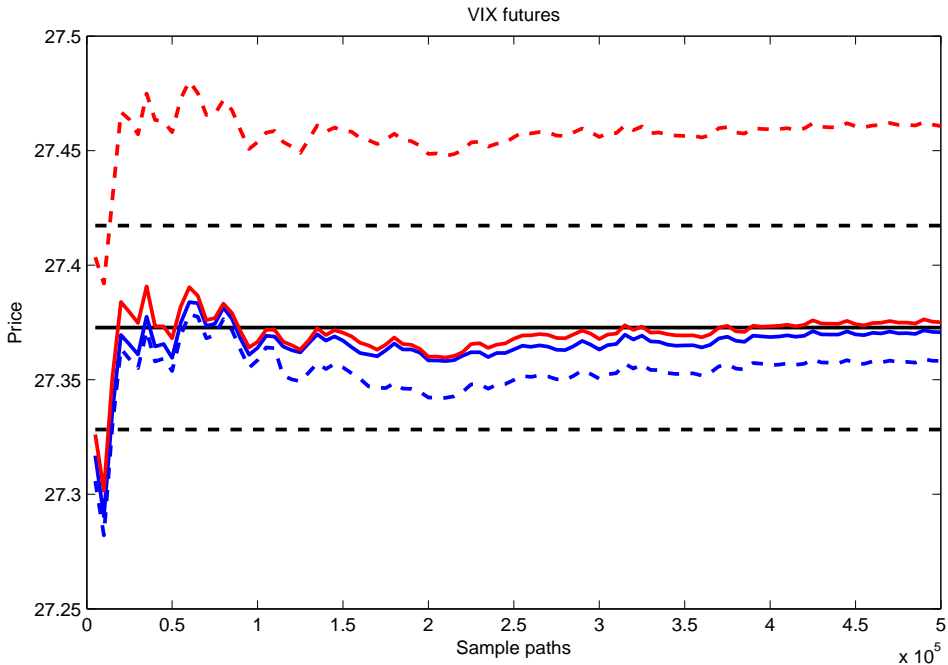}
\caption{Plot of VIX future bounds for different number of simulation paths. The solid black line is the nest Monte Carlo result, with its confidence interval indicated by the dashed black lines. The solid red and blue lines are the upper and lower bounds using higher degree polynomials. The dashed red and blue lines are the upper and lower bounds using lower degree polynomials.}\label{fig1}
\end{figure}
As shown in Table \ref{tabmain}, even with lower degree polynomials, our method produces tight bounds across all VIX derivatives and at all strike levels. In many cases the classical least squares Monte Carlo estimates actually fall outside of our bounds. Our bounds are also clearly superior when compared to the bounds given by the volatility and variance swaps (see Remark \ref{remjensenbounds}). In the higher degree polynomials case shown in Table \ref{tabmain2}, the convergence of our method is verified by the fact that all four estimates are extremely close. In fact, the difference between the estimates is much smaller than the corresponding confidence intervals. This indicates that, in terms of the bias-variance trade-off, most of the error comes from the variance caused by the number of simulation paths, while our method with higher degree polynomials has very little bias due to an excellent regression fit. Figure \ref{fig1} further illustrates this by plotting the VIX future bounds over different number of simulation paths. As the number of paths increases, the bounds stablise towards their limits. The higher degree polynomials results are noticeably better than the low degree results, especially in the upper bound which benefited greatly from the degree of $\psi_j$ increasing from 3 to 4.

Now we examine the effect of varying a single parameter on VIX futures. The other parameters are kept as in Table \ref{tabparam} and lower degree polynomials are used. First of all, we vary the correlation coefficient $\rho$. As shown in Table \ref{tabrho}, larger values of $\rho$ lead to lower VIX future prices. Our method works very well in all cases, especially for higher correlations. This is due to the fact that a negative correlation combined with a leverage coefficient satisfying $\alpha<1$ will lead to larger variations in the realised variance.
\begin{table}[h]\centering
\begin{tabular}{|c|c|c|c|}
\hline
$\rho$ & $\hat u^{f}$ & $\underline u^{f}$ & $\overline u^{f}$ \\
\hline
-0.8  &  27.6457  $\pm$  0.0471  &  27.6439  $\pm$  0.0469  &  27.7969  $\pm$  0.0478 \\
-0.6  &  27.4260  $\pm$  0.0453  &  27.4411  $\pm$  0.0452  &  27.5687  $\pm$  0.0461 \\
-0.4  &  27.2558  $\pm$  0.0439  &  27.2629  $\pm$  0.0439  &  27.3397  $\pm$  0.0448 \\
-0.2  &  27.0511  $\pm$  0.0423  &  27.0549  $\pm$  0.0423  &  27.0924  $\pm$  0.0432 \\
0.0  &  26.8643  $\pm$  0.0409  &  26.8646  $\pm$  0.0408  &  26.8888  $\pm$  0.0418 \\
0.2  &  26.6778  $\pm$  0.0395  &  26.6776  $\pm$  0.0395  &  26.6911  $\pm$  0.0404 \\
0.4  &  26.4776  $\pm$  0.0381  &  26.4784  $\pm$  0.0381  &  26.4867  $\pm$  0.0390 \\
0.6  &  26.2928  $\pm$  0.0368  &  26.2955  $\pm$  0.0368  &  26.3014  $\pm$  0.0377 \\
0.8  &  26.1158  $\pm$  0.0355  &  26.1201  $\pm$  0.0355  &  26.1235  $\pm$  0.0365 \\
\hline
\end{tabular}
\caption{VIX futures for different correlation values}\label{tabrho}
\end{table}

Next, we vary the vol of vol $\eta$ in Table \ref{tabvolvol}. As $\eta$ increases the VIX future decreases. For small values of $\eta$, the upper and lower bounds are essentially the same value. For extremely large values of $\eta$, the quality of the lower bound deteriorates substantially. 
\begin{table}[h]\centering
\begin{tabular}{|c|c|c|c|}
\hline
$\eta$ & $\hat u^{f}$ & $\underline u^{f}$ & $\overline u^{f}$ \\
\hline
0.1  &  30.2498  $\pm$  0.0140  &  30.2497  $\pm$  0.0140  &  30.2500  $\pm$  0.0143 \\
0.2  &  29.6859  $\pm$  0.0245  &  29.6864  $\pm$  0.0245  &  29.6864  $\pm$  0.0250 \\
0.3  &  28.6812  $\pm$  0.0349  &  28.6876  $\pm$  0.0349  &  28.6972  $\pm$  0.0356 \\
0.4  &  27.3309  $\pm$  0.0445  &  27.3381  $\pm$  0.0444  &  27.4319  $\pm$  0.0453 \\
0.5  &  25.9168  $\pm$  0.0531  &  25.8331  $\pm$  0.0530  &  26.2101  $\pm$  0.0541 \\
0.6  &  24.5632  $\pm$  0.0605  &  24.2752  $\pm$  0.0607  &  25.0365  $\pm$  0.0621 \\
0.7  &  23.5007  $\pm$  0.0668  &  22.5781  $\pm$  0.0676  &  24.4188  $\pm$  0.0719 \\
0.8  &  22.5599  $\pm$  0.0726  &  21.2048  $\pm$  0.0742  &  23.5922  $\pm$  0.0806 \\
\hline
\end{tabular}
\caption{VIX futures for different vol of vol}\label{tabvolvol}
\end{table}

Finally, Table \ref{tabalpha} examines the effect of varying the leverage coefficient $\alpha$. The bounds deteriorate somewhat for small values of $\alpha$. This is due to the negative correlation $\rho$, which creates more extreme values of the realised variance for small values of $\alpha$. The reverse would be true if $\rho$ was positive.
\begin{table}[h]\centering
\begin{tabular}{|c|c|c|c|}
\hline
$\alpha$ & $\hat u^{f}$ & $\underline u^{f}$ & $\overline u^{f}$ \\
\hline
0.7  &  27.9683  $\pm$  0.0492  &  26.5554  $\pm$  0.0492  &  28.5024  $\pm$  0.0506 \\
0.8  &  27.3445  $\pm$  0.0445  &  27.3564  $\pm$  0.0445  &  27.4246  $\pm$  0.0453 \\
0.9  &  26.8659  $\pm$  0.0414  &  26.8684  $\pm$  0.0414  &  26.8716  $\pm$  0.0424 \\
1.0  &  26.4738  $\pm$  0.0392  &  26.4725  $\pm$  0.0392  &  26.4735  $\pm$  0.0402 \\
1.1  &  26.1138  $\pm$  0.0373  &  26.1141  $\pm$  0.0373  &  26.1145  $\pm$  0.0383 \\
1.2  &  25.8070  $\pm$  0.0360  &  25.8049  $\pm$  0.0360  &  25.8162  $\pm$  0.0370 \\
1.3  &  25.5235  $\pm$  0.0350  &  25.5291  $\pm$  0.0349  &  25.5564  $\pm$  0.0358 \\
1.4  &  25.2905  $\pm$  0.0342  &  25.2924  $\pm$  0.0341  &  25.3685  $\pm$  0.0350 \\
1.5  &  25.0769  $\pm$  0.0335  &  25.0721  $\pm$  0.0335  &  25.1879  $\pm$  0.0344 \\
\hline
\end{tabular}
\caption{VIX futures for different leverage coefficients}\label{tabalpha}
\end{table}

Even though lower degree polynomials are used in Tables \ref{tabrho}, \ref{tabvolvol} and \ref{tabalpha}, our method generally works very well. In fact, in many cases the bounds are even better than the direct estimates $\hat u^{f}$ obtained from the classical least squares regression approach. Since the tightness of our bounds depends on the quality of the regression fit, the method understandably performs worse when there are extreme variations in the realised variance. This is particularly noticeable for the lower bound as a poor regression fit often leads to frequent occurrences of $M_T>R$. 
In these extreme cases, the results can be improved by using better basis functions.
Alternatively, one may also use the volatility swap $\E\sqrt{R}$ as a replacement lower bound.

\section{Conclusion}\label{sec6}
We have introduced a new model independent technique for the computation of true upper and lower bounds for VIX derivatives. Theorem \ref{propav01} includes a general stochastic duality result on payoffs involving concave functions. This is then applied to VIX derivatives in Theorem \ref{thmbb01}, along with minor adjustments to handle issues caused by the square root function. The upper bound involves the evaluation of a variance swap, while the lower bound involves estimating a martingale increment corresponding to its hedging portfolio. Our bounding technique is particularly useful in complex models where it is difficult to directly compute VIX derivative prices.
Numerically, a single linear least squares Monte Carlo method is used to simultaneously compute the upper and lower bounds. The method is shown to work very well for VIX futures, calls and puts under a wide range of parameter choices.

\end{document}